\documentclass[letterpaper, 10 pt, conference]{ieeeconf}  

\IEEEoverridecommandlockouts                              

\usepackage{comment}
\usepackage{cite} 
\let\labelindent\relax
\usepackage{enumitem}

\usepackage{amssymb, mathbbol, mathtools, amsthm} 
\usepackage{csquotes} 
\usepackage{hyperref} 
\usepackage{xcolor} 
\usepackage{textcomp}
\usepackage[export]{adjustbox}
\usepackage{subcaption}
\usepackage{epstopdf}
\usepackage{setspace,tikz,booktabs,multirow}
\usepackage{float}
\usepackage{array} 
\usepackage{graphics,ragged2e} 
\usepackage{epsfig} 
\usepackage{caption}
\usepackage{mathptmx} 
\usepackage{times} 
\IEEEoverridecommandlockouts                              

\newcounter{remark}
\renewcommand{\theremark}{\arabic{remark}}
\newenvironment{remark}[1][]{%
  \refstepcounter{remark}%
  \par\noindent\textbf{Remark \theremark. #1} \itshape
}{\par}
\newtheorem{theorem}{Theorem}
\newtheorem{lemma}{Lemma}
\newtheorem{definition}{Definition}
\newtheorem{theory}{Theory}
\newtheorem{example}{Example}
\theoremstyle{assumption}
\newtheorem{assumption}{Assumption}
\theoremstyle{proposition}

\theoremstyle{corollary}

\theoremstyle{definition}
\title{\LARGE \bf
How Do Emotions Spread And Affect Decision-Making In Social Networks? An Opinion Dynamics Perspective
}
\author{Aditi Saxena$^{1}$ and Twinkle Tripathy$^{2}$
\thanks{$^{1}$Aditi Saxena is a doctoral student and $^{2}$Twinkle Tripathy is an Assistant Professor in the Department of Electrical Engineering, Indian Institute of Technology Kanpur, India. (email: {\tt\small saditi23@iitk.ac.in}, {\tt\small ttripathy@iitk.ac.in})}}
\begin{document}
\maketitle
\thispagestyle{empty}
\pagestyle{empty}
\begin{abstract}
\emph{Emotions} affect how we think and perceive information. 
Motivated by this, we present a novel framework for modeling \emph{emotions}, and study its impact in social networks. We represent \emph{emotions} on a circle by utilizing the \emph{circumplex model}. These emotions spread in a group through the process of  \emph{emotional contagion}, and influence the interpersonal relationships therein. We propose a model to explain the process of \emph{emotional contagion}, and its effect on the decision-making process, using a Laplacian-based flow. 
Our key finding in this direction shows that agents with similar \emph{emotions} reach agreement eventually, exhibiting \emph{emotional contagion}. Through this, the widely observed behaviors of consensus, polarization, and clustering of opinions follow in complex-weighted networks. Our approach relies on the equitable partitions and spectral properties of Laplacians in complex-weighted networks. Lastly, we validate our findings using synthetic and real-world datasets.
\end{abstract}
\vspace{-3mm}
\section{Introduction}
\label{sec:introduction}
Emotion is an individual’s first response to an event, which plays a vital role in the process of decision-making. When agents in a social network begin to interact, the process of collective decision-making is shaped by two fundamental factors. First, the nature of interactions among agents which can be either cooperative or antagonistic. Second, and equally important, is the emotional state of the agents themselves. Experimental research by psychologists demonstrates that emotions affect information dissemination, and shape the way of processing information \cite{2015emotionanddecisionmaking,2020agentbasedmodel}. Moreover, agents tend to align their emotional states with those of like-minded agents during social interactions in a social network \cite{herrando2021emotional}. 

The rapidly evolving field of opinion dynamics has a substantial amount of work on the decision-making processes. While the existing models, such as DeGroot's, Friedkin-Johnsen, and the bounded confidence model capture behaviors such as consensus, stubbornness, and polarization \cite{PROSKURNIKOV2018166}; but the effect of emotions is not considered in the opinion evolution process. Hence, the modeling of emotion-affected decision-making is largely unexplored. Although a recent work on the evolution of sentiments investigates how sentiments evolve on \emph{Weibo} (China’s microblogging platform) by using the existing opinion-dynamics models\cite{anton2025opiniondynamicsmodelssentiment}.

Motivated by the pivotal role of \emph{emotions} in shaping decisions, we develop a mathematical framework to model emotions and investigate their affect in signed networks. We encapsulate the renowned \textit{circumplex model} 
to represent emotions on a circle in the complex plane\cite{russell1980circumplex}. We propose a model to show the diffusion of {emotions} by representing them using complex-weighted networks. Complex-weighted networks provide a versatile framework for analyzing various physical and computational systems in the areas of quantum communication, machine learning, electrodynamics and social sciences. These complex-weighted networks may correspond to Hermitian or non-Hermitian matrices, each offering distinct insights into stability, and spectral behavior \cite{bottcher2024complex}. Most of the works on complex-weighted networks like \cite{estrada2025geometries,tian2024structuralbalance} consider Hermitian Laplacian matrices for their analyses. 
Some recent works like \cite{DONG20161,2023structuralbalance} introduce balance conditions and study consensus protocols in non-Hermitian settings\cite{saxena2024reep,saxena2024flows,saxena2025laplacian}. The closest work to our present work on complex-weighted networks is \cite{2023structuralbalance},
where structural balance 
 is analyzed along with \emph{multi-partite consensus} in complex-weighted networks. In our work, we use a different definition of Laplacian matrix (mentioned in Sec.\ref{sec:prelims}) and allow the weights of the network to be complex-weighted within and among the clusters rather than having only co-operative relations within the cluster as in \cite{2023structuralbalance}. Additionally, consensus implies identical values of magnitude as well as phase angles of the opinions within each cluster in our work, whereas states have equal magnitudes but unequal phase angles at steady state in \cite{2023structuralbalance}. Further, we summarize our key contributions below:
\begin{enumerate}
\item[i)] Exploiting the far-reaching applications of complex-weighted networks, we propose a novel interpretation to these networks by modeling \emph{emotions} using complex-weights in a social network. Building on this, we first propose a model that captures the process of \emph{emotional contagion}, which is based on the equitable partitions in complex-weighted networks. This leads to agents developing similar emotional states at steady state. 
\item[ii)] Secondly, we model the effect of similar emotions at steady state on the final opinions of agents in a social network. Our model reveals that emotionally similar users in a social group lead to similar final opinions, i.e., multi-partite consensus in complex-weighted digraphs. 
\end{enumerate}

To the best of our knowledge, this is the first study to incorporate emotions in the area of opinion dynamics and rigorously investigate the conditions to achieve \emph{multi-partite consensus} in complex-weighted networks.
\vspace{-2mm}
\section{Preliminaries}\label{sec:prelims}
\vspace{-1mm}
Let $\mathbb{C}^{n \times n}$ denote the space of $n\times n$ matrices with complex entries. For $n=1$, the space is $\mathbb{C}$, and let ${\Re}(m)$ and ${\Im}(m)$ denote the real and imaginary parts of $m \in \mathbb{C}$. Let $\iota$ be the imaginary number, where $\iota=\sqrt{-1}$. An identity matrix is denoted by $I$. 
The corank of a matrix is the complementary to its rank and refers to the dimension of the matrix's nullspace.   

 Let $\mathcal{G}(A)$ be signed when all its edge-weights are positive or negative.
Consider the signed (weighted) directed graph denoted as $\mathcal{G}=(\mathcal{V},\mathcal{E},A)$ where $\mathcal{V}$ is the set of vertices, $\mathcal{E}$ is the set of edges, and $A=[a_{ij}] \in \mathbb{R}^{n \times n}$ is the sociomatrix of the network $\mathcal{G}$. The non-zero entries of $A$ represent inter-personal relationships between individuals, where the positive and negative edges represent friendship and enmity, respectively. Similarly, $\mathcal{G}_{EC}=(\mathcal{V}_{EC},\mathcal{E}_{EC},A_{EC})$ is the graph showing the network structure of emotional contagion (EC) process, where $A_{EC} \in \mathbb{C}^{n \times n}$ is the adjacency matrix of $\mathcal{G}_{EC}$. Both the graphs have the same connectivity structure, but different edge-weights; the complex-weighted edges in $A_{EC}$ simultaneously represent both the intensity and the valence of emotional influence among agents.
Out-degree matrices are diagonal matrices where each diagonal entry represents the out-degree of a corresponding node. The out-degree of a node is equal to the sum of the edge-weights of all its outgoing edges \cite{saxena2025laplacian}. A node is a sink node if its out-degree is zero. Then, the Laplacian matrices are defined as $L=D-A$. We denote ${v}$ and ${w}$ to be the right and left eigenvectors corresponding to the zero eigenvalue of the signed or unsigned Laplacian matrix ($L$). Further, let ${v_e}$ and ${w_e}$ be the right and left eigenvectors corresponding to the zero eigenvalue of the complex-weighted Laplacian matrix ($L_e$). Let ${v_{EC}}$ and ${w_{EC}}$ be the right and left eigenvectors corresponding to the zero eigenvalue of the modified complex-weighted Laplacian matrix ($\tilde{L}=L_{EC}+f$). A cluster is defined as a subgraph consisting of a subset of agents in a digraph. 
 
Next, we present a few existing results, which will be used in the subsequent sections of our paper.
\begin{theory}\label{theory of positive emotions}({\textbf{Broaden-and-build theory of positive emotions \cite{2004broadentheory}})}
 Positive emotions expand an agent's awareness and urge to engage with the world, unlike negative emotions, which narrow our focus to specific actions. This broadened mindset helps us promote new ideas, and social bonds.
\end{theory}
\begin{definition}\label{def:EEP}({\textbf{Almost equitable partitions(AEP hereafter)\cite{liu2019graph}})}
 Suppose that \(\mathcal{G} = (\mathcal{V}, \mathcal{E})\) is a complex-weighted digraph and \(L\) is its Laplacian.  
A partition \(\pi\) of \(\mathcal{V} \setminus \{C_1, C_2, \ldots, C_r\}\) is said to be an AEP if for all \(s, t \in C_i, i, j = 1, \ldots, r\), the following equality holds:  
\begin{equation}
\sum_{k \in C_j} l_{sk} = \sum_{k \in C_j} l_{tk}.
\end{equation}
\end{definition}
\vspace{-2mm}
 \section{PROPOSED MODEL}\label{sec:proposed model}
 Researchers in the field of psychology have emphasized that emotions play a vital role in the evolution of opinions and the process of decision-making \cite{2015emotionanddecisionmaking,izard2009emotion}. {This is also evident from the existing empirical data, which shows that $57$ percent of the consumers make decisions based on their emotions\cite{{datacitation}}. For instance, people tend to make more purchases when they are happy. Consequently, emotions shape an individual's beliefs and behaviour, which in turn affects their judgement and decision-making. To understand and analyze the impact of emotions, we begin by characterizing and representing them. Several models have emerged over the years to systematically represent emotions.
  Among these, the \emph{Russell’s theory of emotions} stands out as one of the most widely accepted frameworks.} According to it, emotions depend on \emph{measures of valence} and \emph{intensity}, as shown in Fig.\ref{fig:circumplex}. Further, the emotional states of agents are represented on a circle and the model is also termed as \emph{circumplex model}. The origin in this circle represents the neutral state of emotions, x-axis represents the valence or pleasantness, ranging from sadness (negative valence) to happiness (positive valence), while the y-axis represents the degree of arousal, ranging from high arousal (alertness) to low arousal (calmness)\cite{russell1980circumplex}.
 \begin{figure}[H]
    \begin{subfigure}{0.28\textwidth}
        \centering
         \includegraphics[scale=0.1]{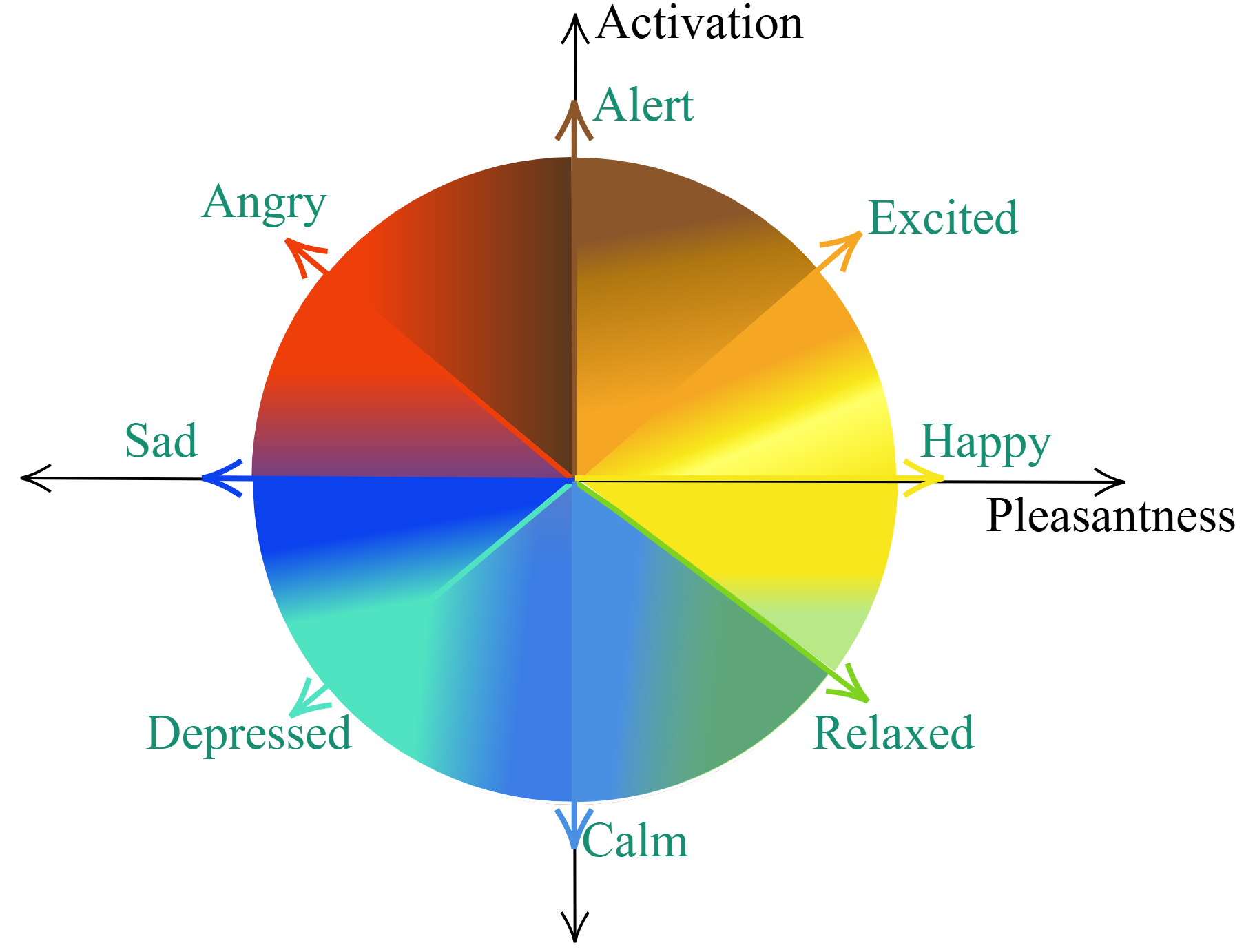}
 \caption{{\small}}
     \label{fig:circumplex}
        \end{subfigure}
    \hspace{-3mm}
    \begin{subfigure}{0.192\textwidth}
    \centering
   \includegraphics[scale=0.25]{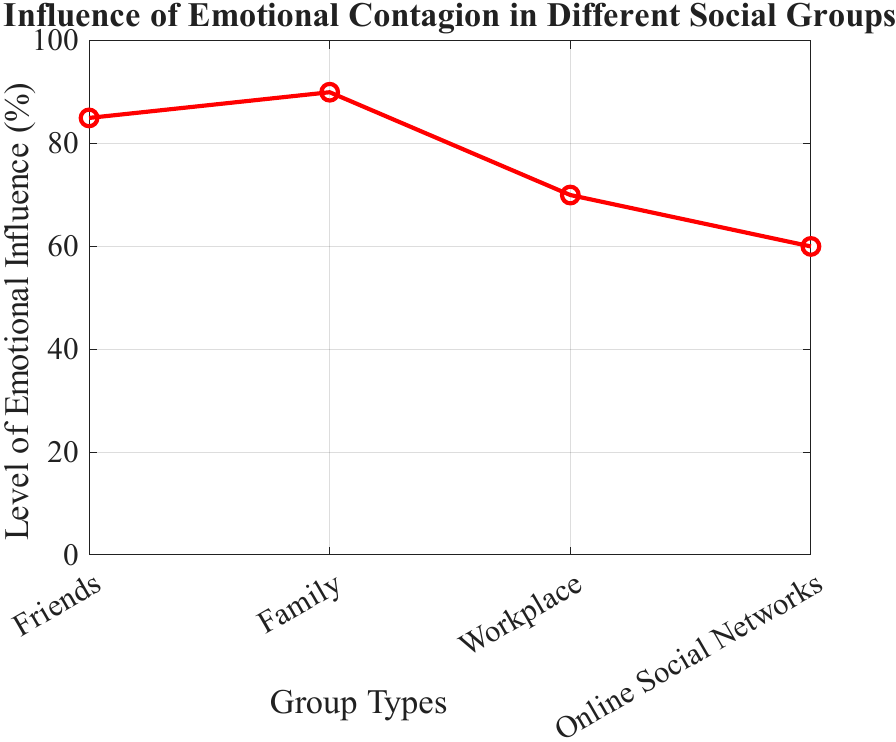}
        \caption{{\small }}
        \label{fig:emotionalcontagiongraph}
          \end{subfigure}     
          \raggedright
    \caption{\small (a) shows mapping of emotions with the color coding adopted from the movie `\emph{Inside Out}', and (b) shows the impact of emotional contagion in different social groups\cite{datacitation}}
    \label{}
\end{figure} 
Following the \emph{circumplex model}, we denote ${e} \in \mathbb{C}$ as the emotion of an individual as;
\begin{equation}\label{eq:emotioncartesianform}
  {e}=v+ \iota a,
\end{equation}
where $v$ and $a$ denote valence and arousal, respectively. This representation of emotions can also be done alternatively using a two-dimensional (2D) real-valued representation. But, we prefer the complex-valued representation because the 2D real-valued representations treat these dimensions independently, whereas a complex-valued representation combines them into a single variable while preserving their orthogonality through the real and imaginary components.
 
Having the mathematical representation of emotions, we now turn our attention to investigate the effect of emotions in the process of decision-making in social networks. {Emotions spread among agents as a result of \emph{emotional contagion} during social interactions and finally affect the decisions in a group of interacting agents. \emph{Emotional contagion} is the tendency to mimic and align expressions with your closed ones and converge emotionally \cite{herrando2021emotional}. Fig.\ref{fig:emotionalcontagiongraph} illustrates that \emph{emotional contagion} significantly influences the emotions of agents, and the \emph{extent of influence} depends on the nature of relationships between agents in social groups. 

Since \emph{emotional contagion} arises through repeated interpersonal interactions over a social network, it can be naturally modeled as a diffusion process on a digraph. The graph Laplacian provides a mathematically principled representation of pairwise influence and has been widely adopted in diffusion models due to its ability to capture the network topology and convergence properties\cite{peters2015multimodal,kikuchi2025network}. {Building on this, we propose the following Laplacian-based framework to capture the dynamics of emotions via \emph{emotional contagion} in a network:}
\begin{equation*} \dot{e} =-\Delta(D+f- A_{EC})e=-\tilde{L}e,
\end{equation*}
where $\tilde{L}$ is the modified Laplacian matrix with a modified degree matrix. 
The following notations are used to model the process of \emph{emotional contagion}:
\begin{itemize}
    \item $A_{EC} \in \mathbb{C}^{n \times n}$: adjacency matrix that encodes the network structure governing dynamics of \emph{emotional contagion}
    \item ${f},e $: a diagonal matrix consisting of feedback values for emotional modulation\footnote{Emotional modulation is an extrinsic mechanism applied to a subset of nodes in a network, represented mathematically as a perturbation to the Laplacian matrix(e.g., $\tilde{L} = L + f$), which results in a shift in the emotional states of the agents.} and a vector consisting of \emph{emotions}, respectively
    \item $\Delta \in \mathbb{R}^{n \times n}$: a diagonal matrix, consisting of $i^{\text{th}}$ diagonal entries with $\delta_i \in [0,1]$ as the \emph{emotional susceptibility} of the $i^{\text{th}}$ agent. 
\end{itemize}
The parameter, \emph{emotional susceptibility} quantifies how strongly an agent is influenced by the emotional states of its neighbors. A maximum value of $\delta_i = 1$ indicates maximum susceptibility, meaning the agent is fully influenced by its neighbors.
The digraph, $\mathcal{G}_{EC}$ is considered to be equitable partitioned; this implies that every agent within a cluster receives the same aggregate emotional influence from every other cluster. Consequently, every cluster is treated as a single emotional unit in further analysis. Following this, the feedback vector in our model ensures that the emotional states converge to steady state outcomes characterized by convergence such as consensus, or clustering of emotions. 

Motivated by the relevance of \emph{emotional contagion}, we further model its effect on the evolution of opinions in social networks. In this context, consider ${x}_{e,i}(t)$ as the $i^{\text{th}}$ entry of $x_e(t)$, which is the state vector consisting of opinions influenced by emotions.
Further, the existing inter-personal relationship between agents `$i$' and `$j$' is given by $a_{ij} \in \mathbb{R}$. Then, the inter-personal relationship is effected by the process of \emph{emotional contagion}, and is shaped by the relative emotional states of both the agents\cite{herrando2021emotional}. {We model the relative emotional influence by the term, $e_i/e_j$, where $e_i$ is the emotion of the $i^{\text{th}}$ agent at the steady state. The relative emotional influence is modeled such that agent $i$'s decision depends primarily on its own emotion with a weak influence on that of agent $j$'s final emotion. Additionally, the ratio ${e_i}/{e_j}$ preserves the same \emph{network structure} without any addition or deletion of edges and yields the same spectral structure for mathematical tractability.} The effective interpersonal relation between agents remains unchanged when they have same steady state emotional states, indicating that emotional influence arises only when there exists an emotional disparity between agents. 
 Combining the aforementioned facts, we propose the following model for an $i^{\text{th}}$ agent:
\begin{equation}\label{eq:modified model}
\dot{x}_{e,i}(t)
=
\sum_{j =1 }^n
a_{ij}\frac{e_i}{e_j}
\big(x_{e,j} - x_{e,i}\big).
\end{equation}
We assume that none of the agents in the network are emotionally stoic at the steady state, i.e., $e_j \neq 0$ for all the agents.
Then, we present the combined model of \emph{emotional contagion} and its effect on final opinions in compact form in Eqn.\eqref{eq:combined_model} as; 
\begin{equation}
\begin{bmatrix}
\dot{e} \\ 
\dot{x}_e
\end{bmatrix}
=
\begin{bmatrix}
-\Delta(D+f-A_{EC}) &  0 \\[2pt]
0 & { -L_e}
\end{bmatrix}
\begin{bmatrix}
e \\ 
x_e
\end{bmatrix},
\label{eq:combined_model}
\end{equation}
where $L_e \in \mathbb{C}^{n \times n}$ is the {Laplacian matrix where the relationships are affected by steady state emotions.}
 From a psychological perspective, the proposed model formalizes the idea that emotional alignment between agents enhances receptivity or willingness to agree to others' opinions, whereas emotional misalignment suppresses such responsiveness in an ongoing discussion (see Table \ref{table:emotions}). Furthermore, we analyze our model and derive the main results in the next section.
 \vspace{-2mm}
\section{EMOTIONAL CONTAGION AND ITS EFFECT ON SOCIAL NETWORKS}
Among networked systems, social networks stand out as one of the most relatable and intricate systems. The social networks can be co-operative or antagonistic, depending on the relationships between agents. 
We propose that complex-weighted networks arise naturally when the emotions of agents are involved in the process of decision-making. 
Further, in order to analyze \emph{multi-partite consensus} of emotions and opinions, we introduce the following assumption, which is imposed on all the networks considered in this work.
\begin{assumption}\label{assumption1}
The network of $N$ agents showing the emotional influences is described by a complex-weighted digraph $\mathcal{G}_{EC}$, with $\mathcal{V} = [1,N]$. The corresponding digraph is almost equitably partitioned, in $k \geq 2$ clusters, $p\in [1, k]$, along with one sink node in the digraph. 
\end{assumption}
{Next, we provide the convergence analysis of the proposed model, given in Eqn.\eqref{eq:combined_model}. Throughout this work, we assume that the networks adhere to Assm.\ref{assumption1}. 
\begin{theorem}\label{thm:consensus}
Let $\mathcal{G}$ be a signed digraph, where the corresponding Laplacian matrix, $-L \in \mathbb{R}^{n \times n}$ is marginally stable of corank $1$. Let $\mathcal{G}_{EC}$ be a complex-weighted digraph satisfying Assm.\ref{assumption1} and $\Re({v_{EC}}) >0,~\Re({w_{EC}}) >0$ be the right and left eigenvectors corresponding to the zero eigenvalue of $\tilde{L}$. Then, according to the proposed model in Eqn.\eqref{eq:combined_model}, the emotions and opinions converge to consensus if and only if all the agents in $\mathcal{G}_{EC}$ have $f_i = 0, \forall ~i \text{ and the matrix,} - \tilde{L}$ is marginally stable of corank $1$.
\end{theorem}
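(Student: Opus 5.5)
The plan is to exploit the block-diagonal structure of Eqn.~\eqref{eq:combined_model}. The emotion subsystem $\dot e=-\tilde L e$ and the opinion subsystem $\dot x_e=-L_e x_e$ are decoupled, with the only link being that $L_e$ is built from the steady-state emotions. So I would first settle the emotion dynamics and then feed their limit into the opinion dynamics. Throughout, I would use the standard fact that for a linear system $\dot z=-Mz$ with $-M$ marginally stable of corank $1$, we have $z(t)\to v\,w^{*}z(0)/(w^{*}v)$. Here $v$ and $w$ are the right and left null vectors, with the zero eigenvalue semisimple and all other eigenvalues having negative real part. Consensus then means $v$ is (a multiple of) $\mathbf 1$.

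For sufficiency on emotions, I would take $f=0$. Then $\tilde L=\Delta(D-A_{EC})$, and since $D$ is the out-degree matrix, $(D-A_{EC})\mathbf 1=0$, so $\tilde L\mathbf 1=0$. Marginal stability with corank $1$ forces $\ker\tilde L=\mathrm{span}\{\mathbf 1\}$, hence $v_{EC}=\mathbf 1$ up to scaling, consistent with $\Re(v_{EC})>0$. The emotions converge to $e^\ast=\mathbf 1\,\bar e$, where $\bar e=w_{EC}^{*}e(0)/(w_{EC}^{*}\mathbf 1)$. The hypothesis $\Re(w_{EC})>0$ guarantees that $w_{EC}^{*}\mathbf 1\neq 0$. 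This limit is common to all agents, so it is emotional consensus, and I take $\bar e\neq0$ by the non-stoic assumption. The AEP in Assm.~\ref{assumption1} is only needed to identify clusters as units; with $f=0$ all clusters collapse to one value.

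Next, substituting $e_i/e_j=1$ gives $L_e=L$. Since $-L$ is marginally stable of corank $1$ and $L\mathbf 1=0$ (real Laplacian rows sum to zero), the same argument yields $x_e(t)\to \mathbf 1\, w^{\top}x_e(0)/(w^{\top}\mathbf 1)$, which is opinion consensus. One point needs checking here: I need $w^{\top}\mathbf 1\neq0$. This follows from semisimplicity, since if $w^{\top}v=0$ the zero eigenvalue would have a Jordan block, contradicting marginal stability.

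For necessity, suppose emotions and opinions reach consensus from arbitrary initial data. If $-\tilde L$ were not marginally stable of corank $1$, then one of three things happens: some mode diverges; the zero eigenvalue is not semisimple, giving polynomial growth; or $\ker\tilde L$ has dimension $\geq2$, or $0\notin\sigma(\tilde L)$ giving a trivial limit $0$ that violates $e_j\neq0$. In the first two cases there is no convergence. In the $\dim\geq2$ case some initial conditions converge to distinct non-consensus limits. If some $f_i\neq0$, consensus requires $\mathbf 1\in\ker\tilde L$, but $\tilde L\mathbf 1=\Delta f\mathbf 1$, whose $i$-th entry is $\delta_i f_i$. This is nonzero when $\delta_i>0$, so $\mathbf 1$ cannot be the consensus direction and the limit $v_{EC}$ separates agent $i$ from the rest.

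I expect the main obstacle to be this necessity step. The case $\delta_i=0$ must be excluded, presumably by treating susceptibility as positive, or else noting that $f_i$ then has no effect. I also have to argue rigorously that a null vector of $\tilde L$ with some $f_i\neq0$ cannot be proportional to $\mathbf 1$ given $\Re(v_{EC})>0$. I would first try a direct row-$i$ computation: the equation $(\tilde L v_{EC})_i=0$ combined with $v_{EC}=c\mathbf 1$ yields $\delta_i f_i c=0$, which is a contradiction. The sink node and the AEP would be used to show that the nonzero $f$ on a cluster shifts that cluster's quotient value away from the others.
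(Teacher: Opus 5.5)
Your proposal is correct and follows essentially the same route as the paper. For sufficiency, $f=0$ makes $\mathbb{1}$ span $\ker\tilde L$, so the emotions reach a common nonzero value, $E=eI_n$, $L_e=L$, and the opinions reach consensus. Necessity rests on the consensus limit $c\mathbb{1}$ having to lie in $\ker\tilde L$; this is the paper's one-line ``only if'', which you make explicit through $\tilde L\mathbb{1}=\Delta f\mathbb{1}$.

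Your treatment is more careful than the paper's, and your caveat $\delta_i>0$ is genuinely needed, since a zero row of $\Delta$ makes $f_i$ invisible. The one thing to add to your necessity case list is a nonzero purely imaginary eigenvalue of $\tilde L$: it produces persistent oscillation, so there is no convergence.
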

\begin{proof}
\textbf{(If)} We have, $f_i = 0 ~ \forall ~i \in \{ 1,2,..,n\}$. Then, it is also given that the matrix, $- \tilde{L}=-\Delta (D-W_{EC})=-\Delta {L_{EC}}$ is marginally stable of corank $1$. 
Further, the right and left eigenvectors corresponding to the simple zero eigenvalue are; ${v_{EC}} = K\mathbb{1}_{n}$, where $K \in \mathbb{C}$ and ${w_{EC}}$, respectively. The final state vector is given by ${e}_{f} = \exp(-\Delta L_{EC}t) {e}_{0}={v_{EC}} {w_{EC}}^{H} {x}_{0}= ({w_{EC}}^{H} {x}_{0})K\mathbb{1}_{n}$. Thus, the final emotional states of agents converge to a consensus, provided the product of the left eigenvector and initial conditions is not zero. 
Consequently, the diagonal matrix consisting of emotional states, $E$ can be written as $E=e_i I_n$ at the steady state.

\textbf{(Only if)} We have, $E=e_i I_n$ at the steady state, which is only possible when ${v_{EC}} = K\mathbb{1}_{n}$, which requires $f_i = 0 ~ \forall ~i \in \{ 1,2,..,n\}$ and $- \tilde{L}$ is marginally stable of corank $1$.

Furthermore, we have $-L \in\mathbb{R}^{n \times n}$ is marginally stable of corank $1$, and $L\mathbb{1}_n=\mathbb{0}_n$ (follows from the definition of Laplacian matrix in Sec.\ref{sec:prelims}) and the resultant Laplacian matrix affected by steady state emotions is given by $L_e = E L E^{-1}$. Using the similarity transformation, the spectrum of $L \text{ and } L_e$ are same, which ensures the marginal stability of $L_e \in \mathbb{C}^{n \times n}$ for complex-weighted digraphs. This implies that $L_e$ has a simple zero eigenvalue with the non-zero eigenvalues in the ORHP of the s-plane. 
Following this, the steady state solution of the transformed flows are given by Eqn.\eqref{eq:zdomainsteadystate1};
\begin{equation}\label{eq:zdomainsteadystate1}
    \underset{t \to \infty}{lim} ~x_e(t)=  \underset{t \to \infty}{lim} ~\exp({-L_e t})x_e(0)={v_e} {w_e}^H x_e(0),
\end{equation}
where ${v_e}$ and ${w_e}$ are the corresponding right and left eigenvectors of zero eigenvalue in the spectrum of $L_e$.
Next, the resultant right eigenvector corresponding to the zero eigenvalue is ${v_e}=E \mathbb{1}_n$, where $E=e_i I_n$ from the previous part of the proof. This follows from the similarity transformation.
\end{proof}
Therefore, it is evident that {consensus of emotions at steady state} leads to consensus of opinions in the resultant complex-weighted networks. We obtain \emph{complex-valued opinions} at the steady state, which represent the influence of emotions, allowing agents to express opinions with varying levels of support or opposition rather than being limited to completely affirmative or negative opinions.

Next, we introduce the characteristic matrix, $P \in \mathbb{R}^{n \times k}$ for AEP digraphs, which encodes the cluster membership of each node (see Ex.\ref{ex:multipartite}). Further, we define the quotient matrix as $L^{\pi}=(P^TP)^{-1} P^TLP$, where $L^{\pi} \in \mathbb{C}^{k \times k}$ for $k$ number of equitable partitions in the AEP digraphs \cite{liu2019graph}. Using these, we present the following result, which will be used to ensure clustering in complex-weighted networks (in Thm.\ref{thm:multipartite}).
\begin{lemma}\label{lemma:AEP}
 Let $\pi$ be an equitable partition with characteristic matrix, 
$P$. Then, the corresponding quotient matrix $L^{\pi}$ satisfies the invariance condition $\tilde{L} P=P L^{\pi}$.
\end{lemma}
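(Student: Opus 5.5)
The plan is to show directly that $\tilde{L}P$ already has the factored form $PM$ for some $k\times k$ matrix $M$, and then to identify $M$ with the quotient matrix $L^{\pi}$. Here $L^{\pi}$ is read as $(P^TP)^{-1}P^T\tilde{L}P$, since $\tilde{L}$ is the matrix appearing in the invariance condition.

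\textbf{Step 1 (entrywise form of $\tilde{L}P$).} The characteristic matrix $P\in\mathbb{R}^{n\times k}$ has $P_{sj}=1$ if $s\in C_j$ and $0$ otherwise. Hence
\[
(\tilde{L}P)_{sj}=\sum_{q\in C_j}\tilde{l}_{sq}.
\]
Definition~\ref{def:EEP}, imposed on $\mathcal{G}_{EC}$ by Assm.~\ref{assumption1}, says that for $s,t\in C_i$ these sums coincide. Define $m_{ij}$ as this common value. Then every row of $\tilde{L}P$ indexed by a node of $C_i$ equals the row vector $(m_{i1},\dots,m_{ik})$. This is exactly the statement $\tilde{L}P=PM$ with $M=[m_{ij}]\in\mathbb{C}^{k\times k}$.

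\textbf{Step 2 (identifying $M$).} Because the clusters are nonempty and disjoint, $P^TP=\mathrm{diag}(|C_1|,\dots,|C_k|)$, which is invertible. Left-multiplying $\tilde{L}P=PM$ by $(P^TP)^{-1}P^T$ gives
\[
M=(P^TP)^{-1}P^T\tilde{L}P=L^{\pi}.
\]
Substituting back yields $\tilde{L}P=PL^{\pi}$.

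\textbf{Main obstacle.} The delicate point is Step 1, because the partition condition has to hold for $\tilde{L}=\Delta(D+f-A_{EC})$ itself and not merely for $L_{EC}$. For an almost equitable partition the condition is guaranteed only for off-diagonal blocks ($i\neq j$), and scaling by $\Delta$ could in principle break it.

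\emph{Off-diagonal blocks.} If $\Delta$ is constant on each cluster, the off-diagonal sums scale by a common factor and remain equal within a cluster. Alternatively, one can simply read Assm.~\ref{assumption1} as applying to the weighted matrix $\tilde{L}$.

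\emph{Diagonal blocks.} I would recover these from the row sums. Since $(D-A_{EC})\mathbb{1}_n=0$, each row of $\tilde{L}$ sums to $\delta_s f_s$. The diagonal block sum is therefore
\[
\sum_{q\in C_i}\tilde{l}_{sq}=\delta_s f_s-\sum_{j\neq i}\sum_{q\in C_j}\tilde{l}_{sq}.
\]
The subtracted term is already constant on $C_i$ by the off-diagonal condition. So the diagonal sum is constant on $C_i$ as soon as $\delta_s f_s$ is constant on $C_i$.

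I would state this requirement, that $\delta_s$ and $f_s$ be constant within each cluster (or that the equitable-partition hypothesis be taken on $\tilde{L}$), explicitly as the interpretation under which the lemma holds. With it in place, Step 2 is routine.
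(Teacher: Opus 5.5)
Your proof is correct and takes essentially the same route as the paper, which omits its proof and defers to the standard argument of \cite{liu2019graph}: the sums of $\tilde{L}$ over each cell are constant on cells, so $\tilde{L}P=PM$, and the left inverse $(P^TP)^{-1}P^T$ identifies $M$ with $L^{\pi}$. You go further than the paper by handling the diagonal blocks through the row sums $\delta_s f_s$ and by stating that $f$ and $\Delta$ must be constant on clusters, or that equitability must be imposed on $\tilde{L}$ itself. This makes explicit a hypothesis the paper leaves implicit, and it matches the block-constant $f=\bigoplus_{p} f_{pp}$ used in Thm.~\ref{thm:multipartite}; likewise, your reading $L^{\pi}=(P^TP)^{-1}P^T\tilde{L}P$ is the one that makes the lemma true.
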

\begin{proof}
The proof follows on the same argument as in \cite{liu2019graph}, because the complex-weighted networks are AEP in the sense of Def.\ref{def:EEP}. It is omitted in the interest of space.
\end{proof}

Now, we present conditions to achieve \textit{multi-partite consensus,} which is the generalization of \emph{consensus} and \textit{bipartite consensus}. There are multiple clusters, where the agents within each cluster converge to the same opinion. Each cluster here represents a distinct real part as well as imaginary part at the steady state in the process of opinion evolution. 
\begin{theorem}\label{thm:multipartite}
Let $\mathcal{G}(A)$ be a signed digraph, where the corresponding Laplacian matrix, $-L \in \mathbb{R}^{n \times n}$ is marginally stable of corank $1$. Let $\mathcal{G}(A_{EC})$ be a complex-weighted digraph satisfying Assm.\ref{assumption1}, $\Re({v_{EC}}) >0,~\Re({w_{EC}}) >0$ be the right and left eigenvectors corresponding to zero eigenvalue of $\tilde{L}$. Then, the proposed model in Eqn.\eqref{eq:combined_model} exhibits clustering of emotions and opinions if the values of $f_p \text{ in } f =  \bigoplus_{p=1}^{k}f_{pp}  ={diag}(f_p)$ are chosen such that the matrices $\tilde{L}$ and $L^{\pi}$ are marginally stable of corank $1$.
\end{theorem}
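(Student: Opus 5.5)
The proof will follow the two-stage structure of Thm.~\ref{thm:consensus}. First I show that the emotion subsystem $\dot e=-\tilde L e$ converges to a cluster-wise constant vector. Then I transfer this to the opinion subsystem through the similarity $L_e=ELE^{-1}$.

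\textbf{Stage 1: emotions.} Since $f=\bigoplus_{p=1}^k f_{pp}$ is constant on each cell of the AEP, the partition stays almost equitable for $\tilde L$, so Lemma~\ref{lemma:AEP} gives $\tilde L P=PL^{\pi}$. By hypothesis $L^{\pi}$ is marginally stable of corank $1$, so it has a kernel vector $u\in\mathbb{C}^k$. The invariance then gives $\tilde L(Pu)=PL^{\pi}u=0$. Because $P$ has full column rank, $Pu\neq 0$, so $Pu$ lies in $\ker\tilde L$. Marginal stability of $-\tilde L$ with corank $1$ makes this kernel one-dimensional, so $v_{EC}=Pu$ up to scaling.

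As in Thm.~\ref{thm:consensus}, the modal decomposition then gives
\[
e_f=\lim_{t\to\infty}\exp(-\tilde L t)e_0=v_{EC}w_{EC}^H e_0=(w_{EC}^H e_0)Pu .
\]
Thus every agent in cluster $C_p$ converges to the common value $(w_{EC}^He_0)u_p$, which is clustering of emotions. The hypothesis $\Re(v_{EC})>0$ forces each $u_p\neq0$. Taking $w_{EC}^He_0\neq0$ (generic, as in Thm.~\ref{thm:consensus}), every $e_i\neq0$, which is the non-stoic assumption needed for $E^{-1}$ to exist. Distinct clusters get distinct emotions whenever the $u_p$ differ, and the choice of $f_p$ is what produces this.

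\textbf{Stage 2: opinions.} Write $E=\mathrm{diag}(e_f)=\bigoplus_p (c\,u_p)I_{|C_p|}$ with $c=w_{EC}^He_0$. Model \eqref{eq:modified model} gives $L_e=ELE^{-1}$. Similarity preserves the spectrum, so $-L_e$ is marginally stable of corank $1$, and its kernel is spanned by $v_e=E\mathbb{1}_n$, because $L_eE\mathbb{1}_n=EL\mathbb{1}_n=0$. Hence
\[
x_e(\infty)=v_ew_e^Hx_e(0)=(w_e^Hx_e(0))\,E\mathbb{1}_n ,
\]
whose entries are $(w_e^Hx_e(0))\,c\,u_p$ on $C_p$. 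So opinions are equal in modulus and phase within each cluster and differ across clusters, which is multipartite consensus. The left eigenvector is explicit: $w_e=E^{-H}w$, where $w$ is the left kernel vector of $L$.

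\textbf{Main obstacle.} The delicate step is the identification $\ker\tilde L=\mathrm{span}(Pu)$. It needs three things to line up. First, Lemma~\ref{lemma:AEP} must hold for $\tilde L$ with the block-diagonal feedback, not just for $L_{EC}$, and with the factor $\Delta$ included. I will handle $\Delta$ by requiring $\delta_i$ constant within clusters, or by absorbing it into the row sums, since $\Delta L$ scales rows. Second, the corank-$1$ hypotheses on both $\tilde L$ and $L^{\pi}$ must be used together, so that a kernel of $L^{\pi}$ really lifts to the unique kernel of $\tilde L$. Third, I must make sure that convergence to $v_{EC}w_{EC}^H e_0$ uses the normalization $w_{EC}^Hv_{EC}=1$, which the condition $\Re(v_{EC}),\Re(w_{EC})>0$ is there to guarantee is nonzero.
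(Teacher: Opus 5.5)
Your Stage~1 is the paper's argument: lift a kernel vector of $L^{\pi}$ through $\tilde L P=PL^{\pi}$. It is in fact tighter, because you use the full column rank of $P$ and the corank-$1$ hypothesis on $\tilde L$ to get $\ker\tilde L=\operatorname{span}\{Pu\}$. You do drop the paper's check that $\tilde L\mathbb{1}_n=f\mathbb{1}_n\neq\mathbb{0}_n$ for $f\neq 0$. That check is what excludes $u\in\operatorname{span}\{\mathbb{1}_k\}$, and hence gives genuine clustering rather than consensus, so put it back. Also, $w_{EC}^Hv_{EC}\neq 0$ follows from the simplicity of the zero eigenvalue. Entrywise positive real parts do not by themselves prevent $w_{EC}^Hv_{EC}=0$ for complex vectors.

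The genuine gap is the sentence ``Model \eqref{eq:modified model} gives $L_e=ELE^{-1}$.'' The paper's proof uses the same identification. It states it in the proof of Thm.~\ref{thm:consensus}, where it is harmless because $E=e\,I_n$, and reuses it in Thm.~\ref{thm:multipartite} to get $v_e=E\mathbb{1}_n$. But it is false once the cluster emotions differ.

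\begin{itemize}
\item Eqn.~\eqref{eq:modified model} has the form $\sum_j w_{ij}(x_{e,j}-x_{e,i})$ with $w_{ij}=a_{ij}e_i/e_j$. Its Laplacian is therefore $L_e=D_e-EAE^{-1}$ with $(D_e)_{ii}=\sum_j a_{ij}e_i/e_j$, whereas $ELE^{-1}=D-EAE^{-1}$.
\item Consequently $L_e\mathbb{1}_n=\mathbb{0}_n$ always. If $-L_e$ were marginally stable of corank $1$, the opinions would converge to consensus, not to a multiple of $E\mathbb{1}_n$.
\item $L_e$ is in general not similar to $L$, so marginal stability is not inherited. Two nodes with $a_{12}=a_{21}=1$ and $e=(1,-1)$ give $L_e=\begin{bmatrix}-1&1\\1&-1\end{bmatrix}$. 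Then $-L_e$ has eigenvalue $2$ and the opinions diverge.
\end{itemize}

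Your Stage~2 is correct only for the gauge-transformed flow $\dot x_{e,i}=\sum_j a_{ij}\bigl(\tfrac{e_i}{e_j}x_{e,j}-x_{e,i}\bigr)$, i.e.\ $\dot x_e=-ELE^{-1}x_e$. For that flow, $y=E^{-1}x_e$ satisfies $\dot y=-Ly$. Hence $x_e(t)\to E\mathbb{1}_n\,w^HE^{-1}x_e(0)$ with $w^H\mathbb{1}_n=1$, which is exactly your limit with $w_e=E^{-H}w$. You need to adopt that model explicitly, or add it as a hypothesis. From Eqn.~\eqref{eq:modified model} as written, the multipartite conclusion for opinions does not follow.
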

\begin{proof} 
We have the perturbed matrix ($\tilde{L}\in \mathbb{C}^{n \times n}$), and its quotient matrix, $L^{\pi}\in \mathbb{C}^{n \times n}$, is marginally stable of corank $1$.
First, we show by contradiction that due to the modified degree matrix ($D=\operatorname{diag}(\tilde{d}_{ii} ); \tilde{d}_{ii} = \sum_{j=1}^N a_{ij} + f_p ~ \text{for } i \in \mathcal{V}_p$); the matrix, $\tilde{L}$ as well as $L^{\pi}$ does not have the eigenvector corresponding to zero eigenvalue in the $\operatorname{span}{\{\mathbb{1}_n\}}$.

    Assume $v_{EC}=K\mathbb{1}_n$ is an eigenvector of $\tilde{L}$, then
\[
\tilde{L} K\mathbb{1}_n = \mathbb{0}_n \text{ for }K \in\mathbb{C},
\]
\[
\tilde{L}K\mathbb{1}_n = ({L} + f)K\mathbb{1}_n = fK\mathbb{1}_n = K(f_1, \dots, f_n)^T.
\]
This contradicts the assumption that $v_{EC}=K\mathbb{1}_n$. Therefore, $v_{EC} \notin \operatorname{span}\{\mathbb{1}_n$\}.

Secondly, it is sufficient to prove that the eigenvector corresponding to the zero eigenvalue is of the form $v_{EC}=[\alpha_1 \mathbb{1}_{n1} ~ \alpha_2 \mathbb{1}_{n2}~ \alpha_3 \mathbb{1}_{n3}~...~\alpha_k \mathbb{1}_{nk}]^T$. 
We provide the following proof for the same: 
\begin{equation*}
    L^{\pi}v^{\pi}= \lambda v^{\pi}
\end{equation*}
Multiplying with the matrix, $P$ both sides, we get
\begin{equation*}
   P L^{\pi}v^{\pi}= \lambda P v^{\pi}
\end{equation*}
Using Lemma \ref{lemma:AEP}, and substituting $P L^{\pi}=\tilde{L}P$;
\begin{equation*}
  \tilde{L}Pv^{\pi}= \lambda P v^{\pi}
\end{equation*}
\begin{equation*}
 \tilde{L}(Pv^{\pi})= \lambda (P v^{\pi})
\end{equation*}
Thus, the eigenspaces of matrices, $L^{\pi}$ and $\tilde{L}$ are related to each other.
Thus, this implies $v_{EC}=[\alpha_1 \mathbb{1}_{n1} ~ \alpha_2 \mathbb{1}_{n2}~...~\alpha_k \mathbb{1}_{nk}]^T$, depending on the values of characteristic matrix, $P$. 

Further, the steady state solution for opinions of the transformed flows are given by Eqn.\eqref{eq:zdomainsteadystate};
\begin{equation}\label{eq:zdomainsteadystate}
    \underset{t \to \infty}{lim} ~x_e(t)=  \underset{t \to \infty}{lim} ~\exp({-L_e t})x_e(0)={v_e} {w_e}^H x_e(0),
\end{equation}
where ${v_e}$ and ${w_e}$ are the corresponding right and left eigenvectors of zero eigenvalue in the spectrum of $L_e$.
Next, the resultant right eigenvector corresponding to the zero eigenvalue is ${v_e}=E \mathbb{1}_n$, where \[E = \bigoplus_{p=1}^{k} e_{pp}, \qquad e_{pp} = \operatorname{diag}(e_p),\]
where $e_p \in \mathbb{C}$ denotes the {steady state emotion} of the $p^{\text{th}}$ agent. Finally, due to the structure of matrix, $E$ and Eqn.\eqref{eq:zdomainsteadystate}, the final opinions converge to multi-partite consensus.
\end{proof}\vspace{-2mm}
\begin{remark}
The afore-mentioned results extend analogously to unsigned digraphs as well. The proposed framework is fully distributed, where each agent updates its emotional state based on its local interactions. Such distributed models are well suited for capturing decentralized social interactions and opinion dynamics in large-scale networks. {However, the present analysis is based on the steady state emotional states and does not address the coupled time-varying opinion dynamics. We aim to extend our framework to time-varying network topologies.}
\end{remark}
 

 Finally, we establish that all the emotionally-similar agents converge to consensus and the whole network converges to multi-partite consensus.
Next, we present the following numerical example to validate the aforementioned results. 
\begin{example}\label{ex:multipartite}
Let $L_{EC}$ be the Laplacian matrix corresponding to the digraph in Fig.\ref{fig:3clusterdigraph} partitioned into three clusters:
\begin{figure}[h!]
\centering
\includegraphics[scale=0.22]{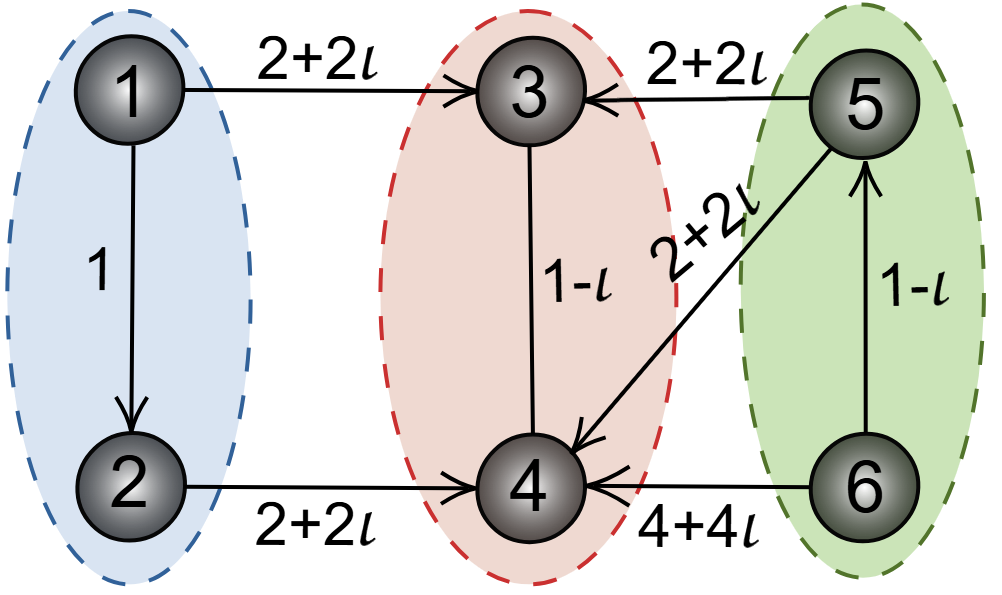}
\caption{An AEP complex-weighted digraph}
\label{fig:3clusterdigraph}
\end{figure}
\end{example}\vspace{-2mm}
The modified Laplacian matrix is given by $\tilde{L}=(D+f)-W_{EC}$, where $f=diag{(10-6\iota, 10-6\iota, 0,0,5+2\iota,5+2\iota)}$. The corresponding Laplacian matrix is marginally stable with corank $1$ with $v_{EC} \notin \operatorname{span}\{\mathbb{1}_n\}$.
Further, according to Thm.\ref{thm:multipartite}, the transpose of characteristic matrix $P^T$ is given as;
\begin{equation*}
P^T=\left[
\begin{array}{@{}lllllll@{}}
1   & 1    & 0   & 0  & 0  & 0  \\
0   & 0    & 1 & 1 & 0  & 0     \\
 0   & 0    & 0 & 0 & 1  & 1     
 \end{array}
\right],
\end{equation*}
 and thus, the quotient matrix is $L^{\pi}=(P^TP)^{-1} P^T\tilde{L}P$ has $v_{\pi}=[ 0.087 +0.18, ~0.86,~0.45 + 0.09\iota]^{\mathrm{T}}$. Following this, $v_{EC}=Pv_{\pi}$ and thus, the Laplacian flows converge to multi-partite consensus, as shown in Fig.\ref{fig:multipartitesim}.
\section{SIMULATION RESULTS}
In this section, we demonstrate the validity of our theoretical results using a synthetic social network and a real-world dataset. We begin by simulating the AEP digraph presented in Ex.\ref{ex:multipartite}. The network is shown in Fig.\ref{fig:3clusterdigraph}, and the dynamics of emotional contagion are simulated accordingly. As predicted by Thm.\ref{thm:multipartite}, the emotional states converge to distinct emotions depending on the feedback matrix. This is clearly validated in Fig.\ref{fig:multipartitesim}. 
\begin{figure}[h!]
    \centering
    \includegraphics[scale=0.56]{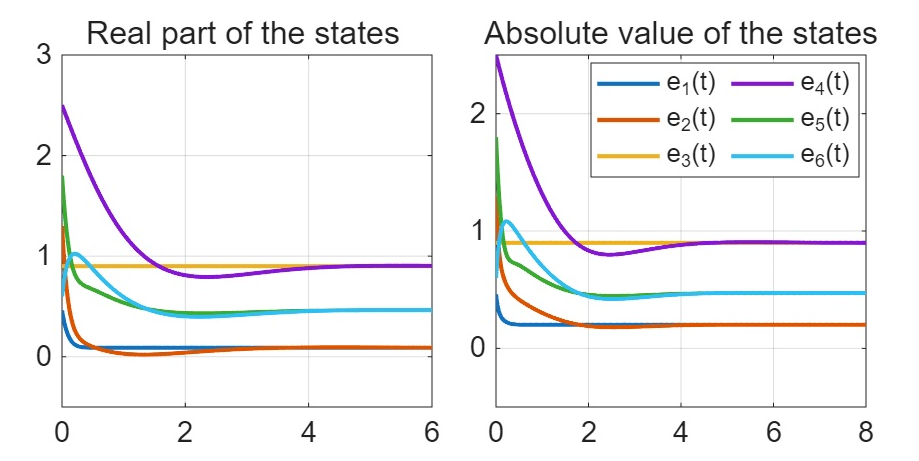}
    \caption{Emotional states of agents converging to clustering (or multi-partite consensus) in the digraph discussed in Ex.\ref{ex:multipartite}}
    \label{fig:multipartitesim}
\end{figure}
\subsection*{Zachary Club's Network} \vspace{-1mm}
Zachary’s Karate Club Network consists of $34$ agents (nodes) and $78$ friendship ties (edges), representing social interactions observed over approximately two years. During the course of the study, a conflict emerged between the club’s administrator and the instructor, leading to a division of the club into two factions. Following this, $18$ members aligned with the administrator, and the remaining $16$ members joined the instructor’s faction. 
Since it is an unsigned network, the corresponding Laplacian matrix $(L)$ is marginally stable with corank $1$. We modify the sociomatrix according to Eqn.\eqref{eq:modified model} and simulate the modified Laplacian flows using the $E$ matrix to show different behaviors with different emotions in Fig.\ref{fig:zacharysim}. Here, we only show the evolution of opinions based on the standard partitions of the graph, since the graph is not AEP.
\begin{figure}[h!]
    \centering
    \includegraphics[scale=0.53]{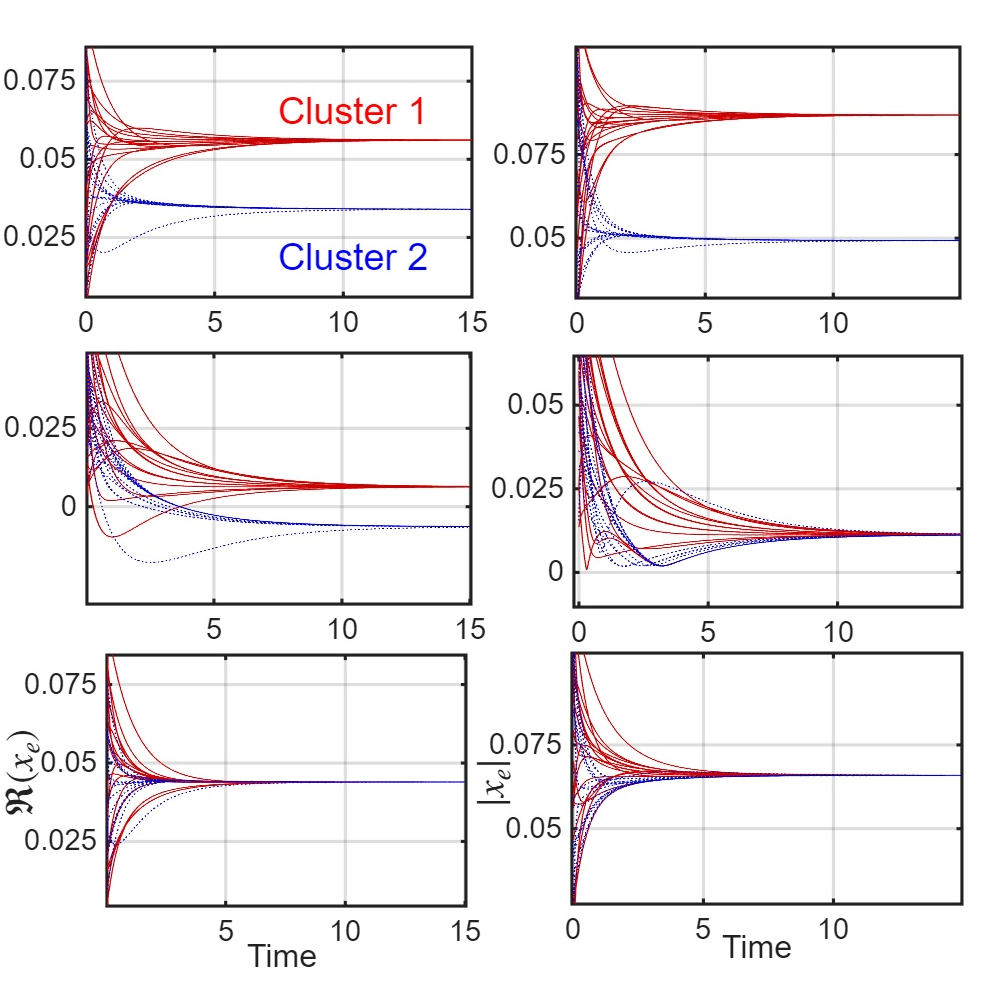}
    \caption{\small Plots illustrating the behavior of agents in Zachary’s Karate Club network, showing the final opinions of the two factions to clustering, polarization, and consensus, respectively}
    \label{fig:zacharysim}
\end{figure}\vspace{-4mm}
\section{DISCUSSION}\label{sec:discussion}
Our findings reveal important insights into how emotional contagion occurs and affects opinion dynamics, with meaningful implications for understanding collective emotional states in social networks. 
Specifically, we show the effect of emotional contagion on the steady state opinions, where agents in a cluster lead to emotional convergence within a social network. As agents emotionally converge, their opinions tend to align with each other, converging to consensus within clusters or, polarization across groups in some cases. 
For instance, we observe that a social group leads to polarization when two of its factions hold completely opposite emotions in the Zachary Club's network (see Fig.\ref{fig:zacharysim}).
Moreover, a significant application of our study lies in its practical implications for improving mental health and emotional well-being in social groups. As per our results, emotions are not merely individual feelings but collective phenomena that can be varied through systematic intervention (modeled using a feedback matrix in our model). Intuitively, an agent in a distressed state surrounded by agents having positively valenced emotional states may, over time, experience emotional uplift through sustained exposure to positive valenced emotions like happiness. 
To summarize, we present a table\footnote{In the interest of brevity, we present four primary emotions, noting that others can be expressed as functions of these emotions.} (Table \ref{table:emotions}), which shows the effect of emotions on the relationships as per our proposed model. 
\begin{table}[h!]
\centering
\caption{Behavior of the Proposed Model}\label{table:emotions}
\renewcommand{\arraystretch}{1.1}
\small
\setlength{\tabcolsep}{4pt}
\begin{tabular}{cccc}
\toprule
$e_i$ & $e_j$ & Friends ($|e_i| > |e_j|$) & Enemies ($|e_i| > |e_j|$) \\
\midrule
Happy & Happy & Co-operation $\uparrow$ & Antagonism $\downarrow$ \\
Happy & Sad   & Antagonism$^{\ast}$ & Co-operation$^{\ast}$$^{\ast}$\\
Happy & Calm  & Unbiased & Unbiased \\
Happy & Alert & Unbiased & Unbiased \\
Sad   & Sad   & Co-operation $\uparrow$ & Antagonism $\downarrow$ \\
Sad   & Calm  & Unbiased & Unbiased \\
Sad   & Alert & Unbiased & Unbiased \\
\bottomrule
\end{tabular}\\
\end{table}

According to Table\footnote{
$^{\ast}${Co-operation is converted to antagonism} \\
$^{\ast}$$^{\ast}${Antagonism is converted to co-operation}} \ref{table:emotions}, the emotional state of happiness increases co-operation and decreases antagonism among the agents. This is in line with the theory of positive emotions (see Theory \ref{theory of positive emotions}). Instead, an agent in a sad emotional state converts co-operation to antagonism and vice-versa. However, it is worth noting that when both the agents are sad, they tend to co-operate with each other, as shared sadness promotes mutual empathy. Further, when the agents are calm or alert, we infer that the decisions are not significantly affected by the existing relationship (co-operation or antagonism). Finally, the modified edge-weight, $(a_e)_{ij}$ implying complex-valued opinions later, can be expressed as $(a_e)_{ij}=\Re((a_e)_{ij}) + \Im ((a_e)_{ij})$, where $\Re((a_e)_{ij}) $ signifies social alignment, whereas  $\Im((a_e)_{ij})$ signifies emotional control between a pair of agents.
\section{CONCLUSION \& FUTURE SCOPE}\label{sec:conclusions}
{This paper proposes a continuous-time linear time-invariant model that demonstrates the spread of emotions and incorporates the influence of final emotions in the decision-making process.} Using this model, we show that when the agents share the same emotional state, they converge to a common complex-valued opinion. Recent studies in the field of neuroscience show that this behavior is a consequence of the mirror neuron system, which aligns with our results\cite{2022decreasingalertness}. 
The proposed framework is novel in the sense that the classical opinion dynamics models provide a strong mathematical foundation for interaction processes, but do not explicitly account for the role of emotions in shaping opinion evolution. This framework further enables us to investigate more general forms of agreement beyond consensus; in particular, we establish conditions for achieving multi-partite consensus in complex-weighted networks. To achieve multi-partite consensus, we introduce the concept of AEPs in complex-weighted networks. 
{We plan to analyze the effect of multiple emotions and explore time-varying complex models in our future work.}
\vspace{-2mm}
\bibliography{References}
\bibliographystyle{unsrt}
\end{document}